\newif\ifmicrotype
\DeclareSIUnit{\nothing}{\relax}
\DeclareSIPrefix\billion{B}{9}
\def\BibTeX{{\rm B\kern-.05em{\sc i\kern-.025em b}\kern-.08em
T\kern-.1667em\lower.7ex\hbox{E}\kern-.125emX}}
\newcommand\Set[2]{\{\,#1\mid#2\,\}}
\newcommand{\etal}{et~al.}
\newcommand{\ie}{i.e.,}
\newcommand{\eg}{e.g.,}
\newcommand{\Boruvka}{Bor\r{u}vka}
\newcommand{\RMAT}{RMAT}
\newcommand{\usroad}{\texttt{US-road}}
\newcommand{\friendster}{\texttt{friendster}}
\newcommand{\twitter}{\texttt{twitter}}
\newcommand{\wcd}{\texttt{wdc-14}}
\newcommand{\friendsterN}{friendster}
\newcommand{\twitterN}{twitter}
\newcommand{\ukwebN}{uk-2007}
\newcommand{\itwebN}{it-2004}
\newcommand{\wcdN}{wdc-14}
\newcommand{\usroadN}{US-road}
\newcommand{\allgather}{allgather}
\newcommand{\alltoall}{all-to-all}
 	\definecolor{emerald}{rgb}{0.31, 0.78, 0.47}
\DeclarePairedDelimiter\parentheses{\lparen}{\rparen}
\newcommand{\src}[1]{\operatorname{src} \parentheses*{#1}}
\newcommand{\dst}[1]{\operatorname{dst} \parentheses*{#1}}
\newcommand{\LandauO}[1]{\mathcal{O}( #1 )}
\newcommand{\column}[1]{\operatorname{col} \parentheses*{#1}}
\newcommand{\row}[1]{\operatorname{row} \parentheses*{#1}}
\newcommand{\lexmin}[1]{\operatorname{min}_{\text{lex}}\parentheses*{#1}}
\DeclarePairedDelimiter\ceil{\lceil}{\rceil}
\DeclarePairedDelimiter\floor{\lfloor}{\rfloor}
\algnewcommand{\Downto}{\textbf{ downto }}
\algnewcommand{\By}{\textbf{ by }}
\newtheorem{theorem}{Theorem}
\newif\ifieeecopyright
\def\ps@IEEEtitlepagestyle{%
  \def\@oddfoot{\mycopyrightnotice}%
  \def\@evenfoot{}%
}
\def\mycopyrightnotice{%
    {\footnotesize 
        \begin{minipage}{\textwidth}
        \textcopyright~2023 IEEE. Personal use of this material is permitted. Permission
from IEEE must be obtained for all other uses, in any current or future
media, including reprinting/republishing this material for advertising or
promotional purposes, creating new collective works, for resale or
redistribution to servers or lists, or reuse of any copyrighted
component of this work in other works. Published version: \href{https://doi.org/10.1109/IPDPS54959.2023.00075}{10.1109/IPDPS54959.2023.00075}~\cite{SandersS23}.
        \end{minipage}
}
  \gdef\mycopyrightnotice{}
}
\begin{document}

\nocite{SandersS23}

\title{Engineering Massively Parallel MST Algorithms
	\thanks{
		\begin{wrapfigure}{r}{.33\columnwidth}
			\vspace{-\baselineskip}
			\includegraphics[width=.33\columnwidth]{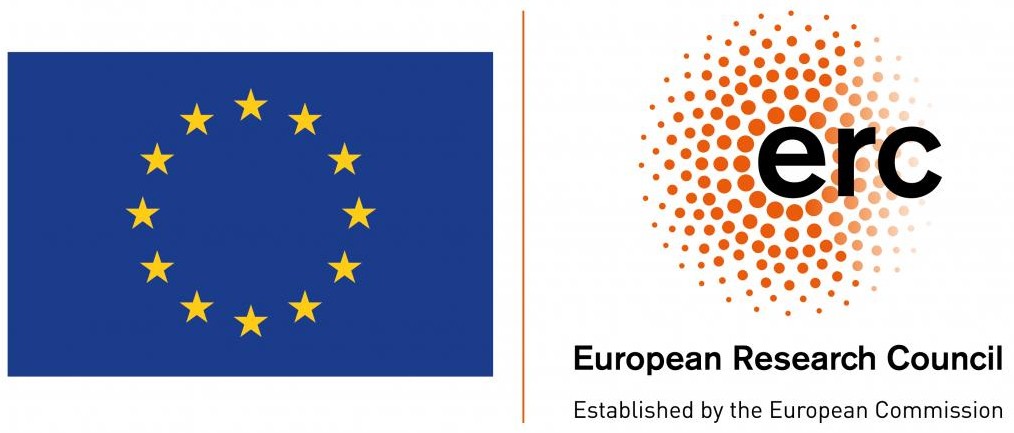}
		\end{wrapfigure}
		This project has received funding from the European Research Council (ERC) under the European Union’s Horizon 2020 research and innovation programme (grant agreement No. 882500).
	}
}

\author{\IEEEauthorblockN{ Peter Sanders}
	\IEEEauthorblockA{\textit{Institute of Theoretical Informatics} \\
		\textit{Karlsruhe Institute of Technology}\\
		Karlsruhe, Germany \\
		sanders@kit.edu}
	\and
	\IEEEauthorblockN{Matthias Schimek}
	\IEEEauthorblockA{\textit{Institute of Theoretical Informatics} \\
		\textit{Karlsruhe Institute of Technology}\\
		Karlsruhe, Germany \\
		schimek@kit.edu}

}

\maketitle

\begin{abstract}
	We develop and extensively evaluate highly scalable distributed-memory
	algorithms for computing minimum spanning trees (MSTs). At the heart of our
	solutions is a scalable variant of \Boruvka's algorithm. For partitioned graphs
	with many local edges we improve this with an effective form of
	contracting local parts of the graph during a preprocessing step. We also adapt
	the filtering concept of the best practical sequential algorithm to develop a
	massively parallel Filter-\Boruvka\ algorithm that is very useful for graphs with
	poor locality and high average degree. Our experiments indicate that our
	algorithms scale well up to at least 65\,536 cores and are up to 800 times
	faster than previous distributed MST algorithms.
\end{abstract}

\begin{IEEEkeywords}
	graph algorithms, distributed algorithms, minimum spanning tree, MPI
\end{IEEEkeywords}

\section{Introduction}

Graphs are a universal way to model relations between objects and are thus
prominently needed everywhere in computing. Hence, it is not surprising that
processing huge graphs is eminently interesting for parallel processing. For
example, this interest is reflected in the graph500 benchmark
(\url{graph500.org}) that has led to the development of massively scalable
algorithms for breadth-first traversal of \RMAT\ random graphs. However, much
less work has been done on other graph problems or even other graph families.

This paper wants to change this for the fundamental problem of computing
minimum spanning trees (MSTs). Given an undirected and connected graph with weighted edges,
the MST problem asks for a subset of the  edges that connects all vertices and has
minimum total weight among all such sets.

In the search for graph problems which can be solved massively in parallel,
the MST problem is a prime candidate since it is known to have
scalable PRAM algorithms (e.g. \cite{awerbuch1987new_short,
	ColeKT96_short}) with time complexity (poly)logarithmic in the
number of vertices of the input graph. The MST problem is not only of
theoretical interest, but also has many applications, \eg{} clustering, image
segmentation, and network design \cite{bateni2017affinity_short,
	wassenberg2009efficient_short, li2005design_short}.

After discussing basic terms and tools in \cref{s:preliminaries} and previous
work in \cref{subsection_related_work}, we describe a scalable distributed
variant of \Boruvka's algorithm \cite{boruvka1926jistem_short} in
\cref{s:algorithm}.  \Boruvka{}'s algorithm is a promising candidate for the distributed setting as it is conceptually simple and has
a high potential for parallelism.

Our implementation represents the graph as a distributed
sequence of edges that is lexicographically sorted. This approach has a
large design space that we explored in an experimental, performance-driven way
with theoretical analysis as a background consideration. Essentially, a
\Boruvka\ round is reduced to a number of sparse \alltoall{}~communications
which are notoriously difficult to scale on large distributed machines due to
contention and high message-startup overheads. We mitigate this by using a
two-level sparse \alltoall~on a logical grid and a fast distributed sorter. We
also switch to a base case with replicated vertex set as soon as the number of
vertices is small enough.

Many graphs have a numbering of the vertices that assigns significant parts of a vertex's neighborhood  to the same processing element (PE) yielding \emph{local edges}.
In \cref{subsection_local_kernel}, we  develop a preprocessing algorithm that contracts local edges in such a way that the only remaining vertices have nonlocal incident edges that are lighter than any of their local incident edge. This reduces processing time by up to a factor $5$.

A disadvantage of \Boruvka's algorithm is that it may have to process all the
edges a logarithmic number of times. In \cref{section_filter_mst}, we therefore
present the Filter-\Boruvka~algorithm that combines \Boruvka's algorithm with
the \emph{filtering} approach of the Filter-Kruskal MST algorithm
\cite{osipov2009filter_short} that in many respects is currently the best
practical sequential algorithm. The idea is to first compute the minimum
spanning forest $F$ of the globally lightest edges, then drop edges that are
within components of $F$, and only then compute the MST of the graph that
includes the surviving edges. We prove that is has work linear in the number of
edges and polylogarithmic span.

The implementation outlined in \cref{s:implementation} uses hybrid parallelism
with multiple OpenMP threads operating within each process of the message
passing interface (MPI) which turns out to be crucial for being able to support
a large number of cores for many inputs.

In \cref{s:experiments}, we discuss experiments on a supercomputer
using up to $2^{16}$ cores.  This includes weak scaling experiments
using 6 families of graphs (grid, 2D/3D random geometric, hyperbolic,
Erd\H{o}s-Renyi, and RMAT). Our best algorithm scales on all these families all the way to $2^{16}$ cores
with a potential for considerable further scaling on the families that have some locality.
In many cases our implementation is one or two orders of magnitude faster than
implementations of previous distributed MST algorithms.

We also perform strong scaling experiments on six large real world
graphs with good scaling up to $2^{14}$ cores on the largest available
inputs as well as order-of-magnitude speedups over the implementations of previous algorithms.

In \cref{s:conclusion}, we summarize our result and discuss possible future improvements.

\paragraph*{Summary of Contributions}
\begin{itemize}
	\item Design and analysis of
	      a practical and highly scalable distributed variant of \Boruvka's MST algorithm.
	\item Design and analysis of a Filter-\Boruvka\ algorithm
	      with constant expected work per edge and polylogarithmic span.
	\item Scalable sparse all-to-all communication using indirect communication.
	\item Fast base case with replicated vertex set and communication-efficient preprocessing by contracting local MST edges.
	\item Hybrid implementation effectively using multithreading on each compute node.
    \item Extensive experimental evaluation on up to $2^{16}$ cores on a large variety of synthetic and real-world instances.
\end{itemize}

\section{Preliminaries}
\label{s:preliminaries}
\subsection{Machine Model and Communication Primitives}\label{ss:model}
In this work, we assume a distributed-memory computing model consisting of $p$ processing elements (PE) numbered  $0..p-1$ \footnote{$a..b$ is shorthand for the sequence $[a, a+1, \dots, b-1, b]$}
allowing single-ported point-to-point communication between arbitrary communication partners \cite{sanders2019sequential}.
In this model, it takes $\alpha + \beta \ell$ time to send a message of length $\ell$
between two PEs. The parameter $\alpha$ denotes the startup overhead to initiate a
message, while $\beta$ quantifies the time needed to communicate a data unit.

There are algorithms for the collective operations broadcast, (all)reduce and
prefix-sum with a time complexity in $\LandauO{\alpha \log(p) + \beta \ell}$.
For the \allgather{}-operation (a.k.a as \alltoall~broadcast or gossiping) we get the same complexity when $\ell$ is the sum of all message lengths.

In addition to these operations with very clear-cut complexity, in distributed graph algorithms we need more complex primitives where the complexity highly depends on the implementation and the concrete communication patterns.
(Personalized, sparse) \alltoall~communication delivers arbitrary sets of messages in a batched way.
This can be achieved in $\LandauO{\alpha p + \beta \ell}$ when $\ell$ is the bottleneck communication volume, i.e., the maximum amount of data sent or received by a PE.
The large startup term $\alpha p$ can be reduced at the cost of more and more indirect data delivery.
For example, using a hypercube communication scheme, $\LandauO{(\alpha + \beta \ell)\log(p)}$ can be achieved. In our implementations, we go part of this way and reduce the startup term to $\alpha \sqrt{p}$
by adding one indirection to the communication.

Similarly, comparison based sorting of $k$ elements can be done in expected time $\LandauO{(k\log(k)+\beta k)/p + \alpha p}$, e.g., by sample sort and direct data delivery, or in time $\LandauO{(k\log(k)+\beta k\log(p))/p + \alpha \log^2(p)}$
using an algorithm that moves the data a logarithmic number of times. Once more, an intermediate solution
moving data a constant number of times turns out to be optimal for large data sets on large distributed machines \cite{axtmann2017robust_short}.

Since our algorithms crucially depend on the performance of sparse \alltoall~and sorting,
we therefore use several implementations depending on the number of PEs used and the amount of data involved.

\subsection{Graph and Input/Output Format}\label{ss:format}

The input is an undirected weighted graph $G = (V,E)$ with vertex \emph{labels} in $1..|V|$. We use $n = |V|$ and $m = |E|$.
We say that $G$ is connected if for all vertices $u,v \in V$, $G$ contains a path from $u$ to $v$.
For a connected graph, the output is a minimum spanning tree (MST).
If $G$ consists of multiple connected components, the output is an MST for each component. Such a set of MSTs is called a minimum spanning \emph{forest} (MSF).

We assume $G$ to be represented
as a lexicographically sorted sequence of directed edges $e = (u, v, w)$ with source vertex $\src{e} = u$, destination vertex $\dst{e} = v$ and weight $w$.
For each edge $(u,v, w) \in E$ the \emph{back} edge $(v,u, w)$ is also in $E$. Note that we sometimes omit the edge weight writing $(u,v)$ instead of $(u,v,w)$.
The edge sequence $E$ is \emph{1D-partitioned} among
the $p$ PEs, \ie{} PE $i$ obtains a subsequence $E_i$ of $E$ of
size $E/p$.
By $\lexmin{E'}$ we denote the lexicographically smallest edge in $E' \subseteq E$. Lexicographical means that we sort with respect to source vertex, destination vertex and then edge weight.
We refer to the local input to PE $i$ as $G_i = (V_i, E_i)$ with
$V_i = \Set{\src{e}}{e \in E_i}$.
$V_{i}$ may share its first vertex with $V_{i-1}$ and its last vertex with $V_{i+1}$.
Such a vertex is called a \emph{shared} vertex.
For vertices $v \in V_i$ that are not shared and edges $e \in E_i$, we refer to PE $i$ as the \emph{home} PE of $v$ and $e$.

The following definitions are from the point of view of a specific PE $i$:
A vertex $v \in V_i$ is \emph{local} to PE $i$.
A vertex that is not local and appears in $E_i$ is a \emph{ghost} vertex.
An edge $e = (u,v, w)$ is \emph{local} if $\src{e}$ and $\dst{e}$ are both local.
Otherwise $e$ is a \emph{cut}-edge.
\cref{fig:vertex_type_visualisation} visualizes the different vertex and edge types.
Let $E' \subseteq E$ and $V' = \{v \in V \mid \exists e \in E': v = \src{e}  \vee v = \dst{e}\}$. Then $G' = (V',E')$ is the subgraph of $G$ induced by $E'$.

We replicate an array of size $p$ containing $\lexmin{E_i}$ for $0 \le i < p$ on each PE as part of our distributed graph data structure. This allows localization of the home PE of a vertex or edge by binary search.

\begin{figure}
	\centering
	\includegraphics[scale=0.8]{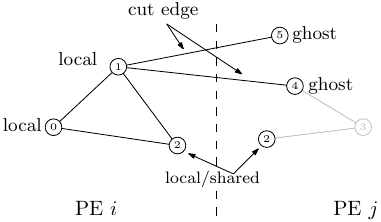}
	\caption{Visualization of the various vertex types from the point of view of PE $i$. Vertices and edges in grey are not present on PE $i$.}
	\label{fig:vertex_type_visualisation}
\end{figure}

For each MSF edge $\{u, v\}$ with weight $w$, we either find $e =
	(u,v,w)$ or $e' = (v, u, w)$ on their respective home PE as output.

\subsection{MST Properties and \Boruvka{}'s Algorithm}
\label{subsection_boruvkaMST}
Since our distributed MST algorithms conceptually follow \Boruvka{}'s
algorithm \cite{boruvka1926jistem_short}, we outline its structure in a sequential setting and state some properties of MSTs we will use later on. Let $G = (V,E)$ be
the input graph. For a graph with multiple connected components, \Boruvka{}'s algorithm (and our distributed variants) can be applied to each component independently yielding an MSF.
We can therefore assume $G$ to be connected without loss of generality.
Since one can use vertex labels to consistently break ties for equal edge weights, we also can assume our input graphs to have distinct edge weights resulting in a unique MST \cite{sanders2019sequential}.
\Boruvka{}'s algorithm runs in
so called (\Boruvka{}) rounds:
First, for each vertex the lightest incident edge is determined. These edges induce connected components in $G$. These components are trees with one 2-cycle, \ie{} one additional edge $(u,v)$, so called \emph{pseudo trees} \cite{sanders2019sequential}.
By some tie breaking rule,  such a pseudo tree can be converted into a rooted tree $T$ with root $u$.
By the min-cut property, all edges of $T$ are MST edges \cite{sanders2019sequential}.
These trees are then contracted into a single vertex -- the so called
\emph{component root}, for which the vertex label of the tree root $u$ is used and edges are relabeled accordingly. Subsequently,
self-loops are deleted. Optionally, parallel edges can be deleted keeping only the
lightest among these. This results in a new graph $G' = (V', E')$ with $V'$
being the set of the component roots on which we subsequently apply a \Boruvka{} round until $|V'| = 1.$
Since we find $|V'| \le |V| / 2$, the computation finishes after at most $\log(|V|)$ rounds.
Note that we do not necessarily need to contract the rooted trees $T$ into
exactly one component. Instead we can also split $T$ into multiple subtrees and
define $V'$ as the set of the roots of all such subtrees.
We will use this property in our distributed MST algorithms to handle shared vertices.

\section{Related Work}
\label{subsection_related_work}
The computation of minimum spanning trees has been extensively studied in many
different models of computation. The earliest MST algorithms in the sequential
setting are due to \Boruvka{} \cite{boruvka1926jistem_short}, Jarnik-Prim
\cite{jarnik1930_short} and Kruskal \cite{kruskal1956shortest_short}.
Many of the later developed algorithms use ideas from these three algorithms.
The KKT-algorithm \cite{karger1995randomized_short} combines \Boruvka's algorithm
with randomized sampling and edge filtering to obtain an algorithm with
linear expected running time.
The Filter-Kruskal algorithm \cite{osipov2009filter_short} combines Kruskal's algorithm with a simplified approach
to edge filtering that works provably well for random edge weights and arguably is the fastest practical sequential algorithm.

In the PRAM model, there are multiple algorithms with polylogarithmic span (critical path length of the computation DAG) \cite{awerbuch1987new_short, chong2003improving_short}.
More practical algorithms have been devised for the shared-memory (single-node)
parallel setting \cite{dhulipala2021theoretically_short, bader2006fast_short, bader2005fast_short, osipov2009filter_short}.
The shared-memory \Boruvka-variant described in \cite{dhulipala2021theoretically_short}
has some similarities to our distributed Filter-\Boruvka\ described in \cref{section_filter_mst}
but uses iterative skewed partitioning rather than recursive symmetric partitioning.
We are also not aware of an analysis comparable to ours.
Very recently, Esfahani~\etal~\cite{esfahani2022mastiff_short} proposed a \emph{structure-aware} algorithm
that outperforms the previous shared-memory state-of-the-art MST algorithm~\cite{dhulipala2021theoretically_short} for graphs with skewed degree distributions.
We discuss their results in \cref{s:experiments}.
For an overview of GPU algorithms, we refer to the related work section of \cite{panja2018mnd_short}.

External MST algorithms
\cite{arge2002cache_short,dementiev2004engineering_short} take memory locality
into account but have inherently sequential components. An MST algorithm for
the parallel external memory model (PEM) \cite{argeGS10_short} works in a small
number of rounds using Euler tour construction and list-ranking as subroutines.
We did not follow that approach as it is relatively complicated and does not
address contention and constant factors in the number of synchronizations in
the way we need it.

Finding MSTs has also been extensively examined in distributed computing
models. Chung and Condon \cite{chung1996parallel_short} were the first to present a
distributed implementation of \Boruvka{}'s algorithm. They use distributed
pointer doubling to contract components and evaluate their algorithm on up to
$64$ cores.
Dehne~\etal~\cite{dehne1998practical_short} provide several practical algorithms and
an evaluation on up to $16$ cores for dense graphs with $m/n > p$. This
condition essentially ensures that the memory of each machine is large enough
to hold all vertices of the graph.
Loncar~\etal~\cite{loncar2013distributed_short} propose distributed variants
of the Kruskal and Jarnik-Prim algorithm that also rely on replicated vertices.

Many distributed graph processing frameworks \cite{salihoglu2013gps_short,
	apacheGiraph_short, yan2015effective_short, li2018regraph_short,panja2018mnd_short} include a (mostly \Boruvka-based) MST algorithm to
evaluate their performance.
ReGraph \cite{li2018regraph_short} and  MND-MST \cite{panja2018mnd_short}
exploit locality by MST computations on locally available subgraphs.
However, none of these approaches have been evaluated on more
than $400$ cores.
Furthermore, many of the above-mentioned graph processing frameworks need considerable time
to load and prepare/partition a graph before the actual computation starts \cite{han2014experimental_short}.
In \cref{s:experiments}, we directly compare ourselves with a version of MND-MST.
Since MND-MST outperforms other graph tools in previous studies \cite{panja2018mnd_short},
this also allows some transitive comparisons.

The MST problem has also received considerable attention in MapReduce/MPC models \cite{karloff2010model_short,
	lattanzi2011filtering_short, qin2014scalable_short,  bateni2017affinity_short, andoni2018parallel_short,
	behnezhad13parallel_short, behnezhad2021massively_short}. Algorithms in these models aim
at the reduction of communication rounds while ensuring that the respective
model's upper bound on communication volume per round is met. In
\cite{karloff2010model_short, lattanzi2011filtering_short, qin2014scalable_short}, the proposed
algorithms use similar ideas as previous parallelizations of \Boruvka's algorithm \cite{chung1996parallel_short,dehne1998practical_short}.
While \cite{karloff2010model_short} and \cite{lattanzi2011filtering_short} present only theoretical results,
the practical implementation of \cite{qin2014scalable_short} on up to $68$ cores reveals
scalability problems whereas our algorithms scale up to $2^{14}$ cores on the same graphs and are much faster when using a comparable number of cores.

Better performance than plain MapReduce can be achieved by adding a distributed hash-table
\cite{bateni2017affinity_short,behnezhad2021massively_short, behnezhad13parallel_short}.
An implementation \cite{behnezhad13parallel_short} using between $200$ and $3\,600$ cores
reports running times on the friendster and twitter graph, which are also part of our benchmark set.
Unfortunately, the number of cores actually used is not given.
Our code on $256$ cores is $37$ times faster on friendster and $77$ times faster on twitter.
Using $4096$ cores, the speedup grows to $297$ and $490$, respectively.

Recently, Baer~\etal\cite{baer2022parallel_short} used sparse matrix kernels to adapt the
Awerbuch-Shiloach-PRAM algorithm \cite{awerbuch1987new_short} to the distributed setting leveraging a distributed library for sparse tensor algebra.
To our knowledge this work was the first to evaluate the scalability of an MST
algorithm on systems with more than a few hundred to thousand cores using
up to $17k$ cores. Therefore, we include their algorithm in our experimental evaluation.

Large scale experiments using up to 6000 cores have also been done for complete
graphs stemming from geometric MST-based clustering problems
\cite{olman2008parallel_short, hendrix2013scalable_short, goyal2016fast_short}.
However, processing dense graph is much easier to parallelize and the algorithms used there
would not scale for the large sparse graphs (with three orders of magnitude more vertices) we are considering.

\section{Scalable Distributed-Memory \Boruvka{}}\label{s:algorithm}
Algorithm \ref{alg:basic_msf} shows pseudocode for our distributed \Boruvka-MST algorithm where PE $i$ works on
the local subgraph $G_i$.
We give a
high level overview before describing subroutines in more
detail.
We first exploit
locality by determining local MST edges $T_i$ without communication (see \cref{subsection_local_kernel}).
These edges are then contracted yielding smaller subgraphs $G_i$ which have to be
processed in a distributed manner.
In every distributed \Boruvka{} round, each PE $i$ first determines the lightest edges incident to its local vertices $E_i^{\mathrm{min}}$. Shared vertices are only considered in the base case discussed below.
This simplifies several parts of the algorithm and has no significant effect on the running time
as it can be shown that the number of local vertices shrinks by at least a factor of
two in every round.
The components induced by the edges $E_i^{\mathrm{min}}$ are then contracted and the component
roots $L^{\mathrm{local}}_i$ are determined (see \cref{subsection_contraction_label_exchange}).
Afterwards, each PE retrieves the new labels for its ghost vertices $L^{\mathrm{ghost}}_i$.
Using this information, the contracted graph is built using the operations {\sc Relabel} and {\sc Redistribute}
eliminating self-loops and parallel edges on the way (see \cref{subsection_redistribution}).

As soon as the global number of vertices is small enough to be stored on one PE, we
switch to our base case algorithm (see \cref{subsection_basecase}).
By choosing the size threshold $\geq p$,
we take into account that up to $p-1$ shared vertices are not contracted in
our distributed \Boruvka\ rounds.
As a very last step, we send each identified MST edge back to its original home PE in {\sc RedistributeMST}.

\begin{figure}
	\begin{algorithm}[H]
		\begin{algorithmic}[0]
			\Function{MST}{$G_i = (V_i, E_i)$}
			\State $G_i, T_i \gets$ \Call{localPreprocessing}{$G_i$}
            \While{$\sum{|V_i|} > \mathrm{threshold}$}
			\State $E_i^{\mathrm{min}} \gets$ \Call{minEdges}{$G_i$}
			\State $L^{\mathrm{local}}_i, T_i \gets$ \Call{contractComponents}{$E_i^{\mathrm{min}}, T_i$}
			\State $L^{\mathrm{ghost}}_i \gets$ \Call{exchangeLabels}{$L^{\mathrm{local}}_i, G_i$}
			\State $G_i' \gets$ \Call{relabel}{$L^{\mathrm{local}}_i, L^{\mathrm{ghost}}_i, G_i$}
			\State $G_i \gets$ \Call{redistribute}{$G_i'$}
			\EndWhile
			\State $T_i \gets$ \Call{baseCase}{$G_i, T_i$}
			\State \Return $\Call{redistributeMST}{T_i}$
			\EndFunction
		\end{algorithmic}
		\caption{High-level overview of our distributed \Boruvka{}-MST algorithm.
			By $i$ we denote the rank of a PE.
			The set $T_i$ stores the MST edges.
		}
		\label{alg:basic_msf}
	\end{algorithm}
\end{figure}

\subsection{Local Preprocessing}
\label{subsection_local_kernel}

The key observation behind local preprocessing is that we can contract edges that can be proven to be MST edges
using only locally available information. This results in a graph where the only remaining vertices
have cut edges as lightest incident edges. We implement this using a variant of \Boruvka's algorithm that works on local edges only and only contracts local edges when no lighter cut edge is incident.

After local contraction, we need to update the labels of ghost vertices as
these might have changed. This can be achieved with the label exchange method
described in Section~\ref{subsection_contraction_label_exchange}.
We also have to reestablish the invariant that edges are globally sorted in lexicographic order.
Since we only contract local edges, this can ``almost'' be done by locally resorting the edges.
What needs to be done nonlocally is sorting the edges incident to shared vertices.
This can be achieved very fast for the frequent case that these are short subsequences
allocated to two  subsequent PEs.

\subsection{Component Contraction and Label Exchange}
\label{subsection_contraction_label_exchange}
Recall that the identified minimum incident edges $E_i^{\mathrm{min}}$ define pseudo trees
connecting the components to be contracted.
We first convert them to rooted trees by tie breaking for 2-cycles and by declaring shared vertices as component roots. The rooted trees are then converted to rooted stars by
pointer doubling along the minimum weight edges \cite{chung1996parallel_short}.
This algorithm iteratively halves the depth of the rooted trees by replacing paths of length two by a direct shortcut.
This can be implemented for a tree edge $(u,v)$ by requesting the next edge $(v,w)$ from the
home PE of vertex $v$ and subsequently replacing edge $(u,v)$ by $(u,w)$
An important special case is when $v$ is a shared vertex. This property can be determined locally from the distributed graph data structure. No communication is necessary in this case as $v$ is known to be a component root.
This eliminates a case that would otherwise induce contention at high degree vertices.

When pointer doubling terminates, each PE possesses the labels of the component roots of its local vertices.
In order to obtain the new labels for ghost vertices, for each cut edge
$(u,v)$ the new label of $u$ is sent to the home PE of $(v,u)$.
If there are multiple edges $e$ with the same home PE and $\src{e} = u$, the new
label of $u$ is only sent once.
Note that all requests and replies are implemented with the bulk-operations discussed in \cref{ss:model}.

\subsection{Relabelling and Redistribution}
\label{subsection_redistribution}
In {\sc Relabel}, each PE scans through its edges $(u,v)$ retrieving their labels $u'$ and $v'$ respectively.
If $u'=v'$, the edge is discarded as a self-loop. Otherwise $(u',v')$ is stored as an edge of the contracted graph.

In operation {\sc Redistribute}, the resulting edges are first sorted lexicographically
(depending on the number of edges different sorting algorithms turn out to be useful (see \cref{ss:model}).
Afterwards, edges between the same pair of vertices are consecutive in the sorted result and can be replaced by a single edge with the smallest occurring weight.
Finally, the distributed graph data structure is reestablished using an \allgather-operation on the first edge
on each PE.

\subsection{Base Case Algorithm}
\label{subsection_basecase}

We stop the distributed \Boruvka{} rounds when the remaining number of vertices $n'$ is small enough to be stored on a single PE.
Then we use yet another variant of \Boruvka's
algorithm which was initially proposed by
Adler~\etal~\cite{adler1998communication_short}. Here, the vertices are replicated over all PEs while the edges are distributed arbitrarily, without need to sort
them.  To streamline this, we remap vertex labels to the dense range $1..n'$.
The lightest edge for each vertex can then be computed using an allReduce-operation
with vector length $n'$. Subsequent contraction is then possible by local replicated computations as in the plain \Boruvka{} algorithm (see \cref{subsection_boruvkaMST}).

\subsection{Analysis}\label{ss:analysis}
We essentially port a PRAM algorithm to distributed memory. The former can be proven to run in (expected) time
$\LandauO{\frac{m}{p}\log(n) +\log^2(n)}$ \cite[Theorem~11.8]{sanders2019sequential}.
Using PRAM emulation \cite{ranade91}, we would achieve a bound for distributed memory that is a factor $\log(p)$ larger. Our implementation deviates from this in a way driven by experimental performance evaluation. This reduces overhead at least for benign instances while
accepting asymptotically larger overheads in the worst case. We abstain from a full analysis for reasons of space and because this would yield highly complex formulas.

We note however that the operations involving all edges -- finding locally minimal edges (segmented min-reduction), label exchange (balanced sparse \alltoall{}), and
building the contracted graph (sorting) all use primitives that can avoid the log-factor overhead of the PRAM emulation.

The component contraction using pointer doubling is a more complicated issue.
In the worst case, it needs a logarithmic number of iterations and heavy contention requiring general PRAM emulation can occur. However, pointer doubling only works on vertices rather than on all edges,
in practice a small number of iterations suffices and also contention usually remains manageable.
Thus, we get away with a relatively simple implementation based on sparse \alltoall\ that in our experiments never dominates the running time.

\section{Filter-\Boruvka{}-Algorithm}
\label{section_filter_mst}
In the worst case, \Boruvka's algorithm
looks at all edges a logarithmic number of times.  Asymptotically
better algorithms are known whose running time depends only linearly on the number
of edges $m$.  The most practical of those may still have running time superlinear in $n$
but work very well for sufficiently dense graphs.
To go into that direction also for massively parallel algorithms, we look into
\emph{Filter-Kruskal}\cite{osipov2009filter_short}.
This algorithm is based on the observation that for many instances, most MST edges are quite light.
To exploit this, Filter-Kruskal partitions the edges into light edges $E_{\le}$ and heavy edges $E_{>}$
similar to quicksort. It first recurses on the light edges computing an MSF $T_{\leq}$ of $(V,E_{\le})$. Before also recursing on $E_{>}$,
it \emph{filters} them by eliminating those inside a connected component of $T_{\leq}$.
Filter-Kruskal is also a reasonable shared-memory parallel algorithm \cite{osipov2009filter_short}.
However, Filter-Kruskal has an $\Omega(n)$ critical path length since
MST edges are found sequentially.
To eliminate this bottleneck, we propose \emph{Filter-\Boruvka} which replaces Kruskal's algorithm by \Boruvka's algorithm in the base case.
To facilitate subsequent filtering, we modify the output specification of the underlying \Boruvka~algorithm to
also provide component representatives within the MSF for each vertex.
Changing the base case algorithm
does not affect the performed work but reduces the span from linear to polylogarithmic:

\begin{theorem}
	For random edge weights,
	sequential Filter-\Boruvka\ has expected sequential running time $\LandauO{m+n\log(n)\log(\frac{m}{n}})$.
	When used as a parallel algorithm its expected span is $\LandauO{\log(\frac{m}{n}})$ times the span
	of the underlying parallel \Boruvka\ implementation.
\end{theorem}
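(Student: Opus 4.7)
The plan is to analyze the recursion of Filter-\Boruvka{}, which partitions the edge set $E$ around a random pivot into $E_{\le}$ and $E_{>}$, recursively computes the MSF $T_{\le}$ together with per-vertex component representatives on $E_{\le}$, filters $E_{>}$ into $E_{>}'$ by discarding edges whose endpoints carry the same representative under $T_{\le}$, and then recurses on $E_{>}'$. The base case, triggered as soon as $|E| \le c n$ for a suitable constant $c$, runs (sequentially or in parallel) the \Boruvka{} variant of \cref{s:algorithm}. I would first verify that the modified output specification (returning representatives alongside $T_{\le}$) does not inflate the complexity of the base case, since component IDs are a by-product of contraction in \Boruvka{}.

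The crux is bounding the expected size of $E_{>}'$. The key probabilistic step is a variant of the Karger--Klein--Tarjan sampling lemma: since edge weights are random, conditioning on the pivot rank makes $E_{\le}$ a uniformly random subset of $E$ of the conditioned size, and the expected number of edges of $E \setminus E_{\le}$ whose endpoints lie in \emph{different} components of the MSF of $E_{\le}$ is at most $n\,|E|/|E_{\le}|$. Because a random pivot places the split within a constant factor of the median with constant probability, this yields $\mathbb{E}[|E_{>}'|] = \LandauO{n}$. This is the main obstacle and is the only place where the random-weight assumption is used; the careful point is that our split is by rank rather than by independent Bernoulli inclusion, so I would either invoke KKT via a coupling with Bernoulli sampling or give a direct cut-counting proof conditioned on the realized components of $T_{\le}$.

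With the filtered-size bound in hand, the sequential work satisfies
\begin{equation*}
W(m) = W(m/2) + W(\LandauO{n}) + \LandauO{m},
\end{equation*}
since partition and filter are linear work and the right recursion on $E_{>}'$ immediately hits the base case. Unrolling over the $\LandauO{\log(m/n)}$ levels of the left chain and using $W(\LandauO{n}) = \LandauO{n\log n}$ for sequential \Boruvka{} on $\LandauO{n}$ edges gives $W(m) = \LandauO{m + n\log(n)\log(m/n)}$, which is the claimed sequential bound. A small subtlety is that the ``$\LandauO{n}$'' filtered size holds only in expectation, so I would fold the tail contribution into the recurrence by taking expectations level-by-level using linearity.

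For the span, partition and filter admit standard parallel implementations (prefix sums, gather, compaction) with polylogarithmic span that is absorbed by the base-case span. The recurrence
\begin{equation*}
S(m) = S(m/2) + S_{\Boruvka}(\LandauO{n}) + \LandauO{\mathrm{polylog}\, n}
\end{equation*}
unrolls over $\LandauO{\log(m/n)}$ levels to $S(m) = \LandauO{\log(m/n)} \cdot S_{\Boruvka}(\LandauO{n})$. The remaining caveat to address is that the right recursive call is on expected $\LandauO{n}$ edges but could exceed $cn$ occasionally; I would handle this by a Markov-style argument showing that the expected span contribution from any level is still $\LandauO{S_{\Boruvka}(\LandauO{n})}$, so the bound holds in expectation as stated.
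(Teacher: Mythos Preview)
Your overall strategy matches the paper's: invoke the KKT sampling bound to control the size of the filtered right branch, unroll the left spine over $\LandauO{\log(m/n)}$ levels, and charge one base-case \Boruvka{} call per level. For the sequential work bound the paper simply cites the Filter-Kruskal analysis, so your direct recurrence is an acceptable substitute.

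The gap is in your span argument. Markov's inequality on $|E_{>}'|$ gives only $\Pr[|E_{>}'|>cn]=\LandauO{1/c}$, a constant. When this event occurs, the right call does not hit the base case but launches its own Filter-\Boruvka{} recursion, whose span you can only bound by $S(|E_{>}|)\le S(2m/3)$. Plugging this into the expectation and using the inductive hypothesis $S(2m/3)\le A\log(m/n)\cdot S_B$ (writing $S_B$ for the span of one base-case \Boruvka{} call), the right branch contributes roughly $S_B+\Theta(1/c)\cdot A\log(m/n)\cdot S_B$ in expectation; this is not $\LandauO{S_B}$ uniformly in $m$, so the recurrence does not close. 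Linearity of expectation does not rescue you here because the quantity you need bounded per level is already the span of a full sub-recursion, not a sum of independent pieces.

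The paper resolves this by using the full distribution of $|E_{>}'|$: under balanced pivoting it is stochastically dominated by a negative binomial $\mathrm{NB}(n,1/3)$, and a Chernoff bound then yields $\Pr[|E_{>}'|>cn]=\exp(-\Theta(n))$. That exponentially small probability absorbs even a crude $\LandauO{n}$ bound on the number of base-case calls in the bad case, giving $\LandauO{\log(m/n)}$ expected base-case \Boruvka{} calls and hence the claimed span. You need a concentration bound of this strength, not just the first moment.
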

\begin{proof}
	The time bound immediately follows from the bound for Filter-Kruskal \cite{osipov2009filter_short}
	since the only difference is in the base case. This does not affect the asymptotic running time since
	both Kruskal's and \Boruvka's algorithm have running time $\LandauO{n\log(n)}$
	for inputs with $\LandauO{n}$ edges.

	Since filtering and partitioning have an (expected) span in $\LandauO{\log(m)}$ and there
	are at most $\LandauO{\log(n)}$ recursion levels prior to a base case
	\Boruvka~call, it suffices to count the expected number of base case \Boruvka~calls to get an upper bound on the overall span.
	To facilitate the analysis, we assume $|E_{\le}|$ and $|E_{>}|$ are in the range $[m/3,2m/3]$.
	If the chosen pivot does not yield such a balanced partition, one can simply repeat
	the pivot selection until it does. This process will need $\LandauO{1}$ repetitions in expectation.
	Using a balanced partitioning, the number of edges which survive filtering is dominated by a random variable $X\sim\mathrm{NB}(n,1/3)$ with negative binomial distribution\cite{karger1995randomized_short}.
	Assume that we stop the recursion when the number of input edges is $\le cn$ with $c > 3$.
	The probability that we recurse on $E_{>}$ then is at most $\mathrm{Pr}[X > cn]$. This is equivalent to the probability
	$\mathrm{Pr}[Y < n]$, where $Y\sim\mathrm{B}(cn,1/3)$ is a binomial distributed random variable.
	The mean of $Y$ is $\mu = cn/3$. Let $\delta = (c-3)/c$.
	Now, we can apply a Chernoff-bound \cite[Theorem~4.5]{mitzenmacher2017probability} to obtain an upper bound on the probability that the right recursion is not stopped directly,
	$\mathrm{Pr}[Y < n] = \mathrm{Pr}[Y < (1-\delta)\mu]\le \exp(-\delta^2\mu/2)$,
    which is in $1/\exp(\Theta(n))$.

	There are $\LandauO{\log(m/n)}$ right recursive calls associated with the leftmost path in the computation tree.
	For each of these right recursive calls, the recursion either stops directly resulting in one additional base case call or continues with an exponentially small probability which yields at most
    $n^{\LandauO{1}}$ additional base case calls.
	Therefore, by the union bound argument, the expected number of base case \Boruvka{}~calls is in
    $\LandauO{\log(m/n) (1 + n^{c_1}/\exp(c_2 n))} = \LandauO{\log(m/n)} + o(1) = \LandauO{\log(m/n)}$ with $c_1$ and  $c_2$ being positive constants.
    \footnote{This is an improved version of the proof in the conference version.}

\end{proof}

Our distributed implementation of Filter-\Boruvka{} is shown in
\cref{alg:filter_msf}. We perform local preprocessing as described in
\cref{subsection_local_kernel} once at the beginning. When the graph is
sufficiently sparse, \ie{} the number of input edges is in $\LandauO{n}$, we stop the recursion and use our distributed \Boruvka-MST
algorithm as base case. There are three differences when calling distributed \Boruvka-MST
to its description given in \cref{alg:basic_msf}:
\begin{itemize}
	\item We refrain from performing local preprocessing as graph locality has
	      already been exploited beforehand.
	\item We do not redistribute the computed MST edges as this is done once at the end of Filter-\Boruvka{}.
	\item There is a distributed array $P$ of size $n$, where PE $i$ holds the elements $in/p..(i+1)n/p$.
	      After a \Boruvka{} round, each PE stores the component root for its local
	      vertices in $P$. In the end, the implicitly constructed trees in $P$ are contracted
	      using $\LandauO{\log(\log(n))}$ pointer doubling rounds (see \cref{subsection_contraction_label_exchange}).
\end{itemize}
For {\sc PivotSelection }, randomly sampled edges are sorted with a distributed sorting algorithm (see \cref{ss:model}).
Afterwards, the median of the sample is broadcasted as pivot element $w_{\mathrm{pivot}}$.

For the filtering step, each PE first requests the labels of the component
representatives for the local vertices in its part of $E_{>}$ from the
distributed array $P$ ({\sc RequestLabels}).
We then use the label exchange and relabelling routines from \cref{subsection_contraction_label_exchange} and
\cref{subsection_redistribution} to rename the edges according to these new labels.
For edges $(u,v)$ that are within one component of the previously computed partial MSF $T_{\leq}$, we now find $u = v$.
Thus, these can be discarded easily. In a last step, we redistribute the edges and eliminate parallel ones as described in \cref{subsection_redistribution}.

\begin{figure}
  \begin{algorithm}[H]
	\begin{algorithmic}[0]
		\Function{Filter-MST}{$G_i = (V_i, E_i)$}
		\State $G_i, T_i \gets$ \Call{localPreprocessing}{$G_i$}
		\State \Call{Rec-Filter-MST}{$G_i, T_i, P$}
		\State \Return $(\Call{redistributeMST}{T_i})$
		\EndFunction
		\Function{Filter}{$G_i = (V_i, E_i), P$}
		\State $L^{\mathrm{local}}_i$ $\gets \Call{requestLabels}{V_i, P}$
		\State $L^{\mathrm{ghost}}_i$ $\gets \Call{exchangeLabels}{L^{\mathrm{local}}_i, G_i}$
		\State $E_i'$ $\gets \Call{relabel}{L^{\mathrm{local}}_i, L^{\mathrm{ghost}}_i, G_i}$
		\State $E_i'' \gets \{ (u,v) \in E_i' \mid u \neq v \}$
		\State \Return $\Call{Redistribute}{(V_i, E_i'')}$
		\EndFunction
		\Function{Rec-Filter-MST}{$G_i = (V_i, E_i), T_i, P$}
		\If{\Call{isSparse}{$G_i, |P|$}}
		\State \Return \Call{MST}{$G_i, P$}
		\EndIf
		\State $w_{\mathrm{pivot}} \gets \Call{PivotSelection}{G_i}$
		\State $E_i^{\le} \gets \Set{(u,v,w) \in E_i}{w \le w_{\mathrm{pivot}}}$
		\State $E_i^{>} \gets \Set{(u,v,w) \in E_i}{w > w_{\mathrm{pivot}}}$
		\State $T_i \gets \Call{Rec-Filter-MST}{(V_i, E_i^{\le}), T_i, P}$
		\State $(V_i', E_i^{>'}) \gets \Call{Filter}{E_i^{>}, P}$
		\State \Return \Call{Rec-Filter-MST}{$(V_i', E_i^{>'}), T_i, P$}
		\EndFunction
	\end{algorithmic}
	\caption{High-level overview of our Filter-\Boruvka{} algorithm. By $i$ we denote the rank of a PE.
		The set $T_i$ stores the MST edges.
		Parameter $P$ denotes a distributed array storing a component representative for each vertex.
	}
	\label{alg:filter_msf}
\end{algorithm}
\end{figure}

\section{Engineering and Implementation Details}\label{s:implementation}
Here, we describe some engineering refinements we made to
obtain a scalable implementation of our algorithms in practice. Our algorithms
are implemented in C++ using MPI for distributed communication in funneled mode, i.e., using only one dedicated thread per process for MPI communication. Multithreading
is realized with OpenMP. For shared-memory parallel algorithmic building blocks
like prefix-sums or filtering, we use the \texttt{parlay} library \cite{blelloch2020parlaylib_short}.
\subsection{Reducing Startup Overhead of All-To-All Exchanges}
Many steps of our two distributed MST algorithms involve multiple \alltoall~exchanges with often only few bytes per message.
During our experiments we noticed that the startup overhead of the built-in MPI routine for the personalized \alltoall{} exchange \texttt{MPI\_Alltoallv} becomes prohibitive with increasing number of PEs.
Therefore, we use an indirect grid based \alltoall~variant if the average number of bytes sent per message is below a certain threshold (we use $500$ on our system).
The PEs are arranged in a (virtual) two dimensional grid
with $c = \floor{\sqrt{p}}$ columns and $r = \ceil{p / c}$ rows.
Note that $c	\le r \le c + 2$. PE $i$ resides in column $\column{i} = i \mod c$ and
$\row{i}	= \floor{i / c}$. Instead of sending messages directly, we exchange them in two
steps. A message from PE $i$ to PE $j$ is first sent to the intermediate PE $t$
in row $\row{j}$ and column $\column{i}$ and from there to its final
destination $j$. Since $t$ is in the same column as $i$ and the same row as
$j$, the message exchanges can be realized with two standard \texttt{MPI\_Alltoallv} exchanges
with at most $\sqrt{p} + 2$ participating PEs, thus reducing the startup
overhead to $\LandauO{\sqrt{p}}$ at the cost of a doubled communication volume. When $p \neq cr$ and $j$ is a member of
the last (incomplete) row of the grid, the intermediate PE is the one in row
$\column{j}$ and column $\column{i}$.
For the second message exchange such a $j$ is (virtually) appended to row $\column{j}$.
For larger $p$, the grid approach can easily be generalized to dimensions $2 < d \le \log(p)$.
For $d = \log(p)$, we basically get the hypercube \alltoall~algorithm from \cite{johnsson1989optimum_short}.

In particular, we found two-level \alltoall~to be crucial for pointer doubling during component contraction. \cref{plot:twolevel_alltoall_effect} shows the accumulated running time of the component contraction phases for an Erd\H{o}s-Renyi graph with $2^{17}$ vertices and $2^{21}$ edges per core.
(One-level-)pointer doubling using the built-in \texttt{MPI\_Alltoallv} routine directly, exhibits significantly increasing running times with a growing number of cores whereas our two-level approach scales very well.
\begin{figure}
	\centering
	\vspace*{-.25cm}
\begin{tikzpicture}[x=1pt,y=1pt]
\definecolor{fillColor}{RGB}{255,255,255}
\path[use as bounding box,fill=fillColor,fill opacity=0.00] (0,0) rectangle (162.61,108.41);
\begin{scope}
\path[clip] (  0.00,  0.00) rectangle (162.61,108.41);
\definecolor{drawColor}{RGB}{255,255,255}
\definecolor{fillColor}{RGB}{255,255,255}

\path[draw=drawColor,line width= 0.6pt,line join=round,line cap=round,fill=fillColor] (  0.00,  0.00) rectangle (162.61,108.41);
\end{scope}
\begin{scope}
\path[clip] ( 28.98, 23.18) rectangle (160.61, 85.38);
\definecolor{fillColor}{RGB}{255,255,255}

\path[fill=fillColor] ( 28.98, 23.18) rectangle (160.61, 85.38);
\definecolor{drawColor}{gray}{0.92}

\path[draw=drawColor,line width= 0.6pt,line join=round] ( 28.98, 26.76) --
	(160.61, 26.76);

\path[draw=drawColor,line width= 0.6pt,line join=round] ( 28.98, 54.66) --
	(160.61, 54.66);

\path[draw=drawColor,line width= 0.6pt,line join=round] ( 28.98, 82.56) --
	(160.61, 82.56);

\path[draw=drawColor,line width= 0.6pt,line join=round] ( 58.90, 23.18) --
	( 58.90, 85.38);

\path[draw=drawColor,line width= 0.6pt,line join=round] (106.76, 23.18) --
	(106.76, 85.38);

\path[draw=drawColor,line width= 0.6pt,line join=round] (154.62, 23.18) --
	(154.62, 85.38);
\definecolor{fillColor}{RGB}{0,158,115}

\path[fill=fillColor] ( 33.54, 25.83) --
	( 36.39, 25.83) --
	( 36.39, 28.69) --
	( 33.54, 28.69) --
	cycle;
\definecolor{fillColor}{RGB}{0,0,0}

\path[fill=fillColor] ( 34.97, 28.23) --
	( 36.89, 24.90) --
	( 33.05, 24.90) --
	cycle;
\definecolor{fillColor}{RGB}{0,158,115}

\path[fill=fillColor] ( 57.47, 31.18) --
	( 60.32, 31.18) --
	( 60.32, 34.04) --
	( 57.47, 34.04) --
	cycle;
\definecolor{fillColor}{RGB}{0,0,0}

\path[fill=fillColor] ( 58.90, 31.89) --
	( 60.82, 28.57) --
	( 56.98, 28.57) --
	cycle;
\definecolor{fillColor}{RGB}{0,158,115}

\path[fill=fillColor] ( 81.40, 40.66) --
	( 84.26, 40.66) --
	( 84.26, 43.52) --
	( 81.40, 43.52) --
	cycle;
\definecolor{fillColor}{RGB}{0,0,0}

\path[fill=fillColor] ( 82.83, 33.23) --
	( 84.75, 29.90) --
	( 80.91, 29.90) --
	cycle;
\definecolor{fillColor}{RGB}{0,158,115}

\path[fill=fillColor] (105.33, 51.99) --
	(108.19, 51.99) --
	(108.19, 54.85) --
	(105.33, 54.85) --
	cycle;
\definecolor{fillColor}{RGB}{0,0,0}

\path[fill=fillColor] (106.76, 34.33) --
	(108.68, 31.00) --
	(104.84, 31.00) --
	cycle;
\definecolor{fillColor}{RGB}{0,158,115}

\path[fill=fillColor] (129.27, 65.51) --
	(132.12, 65.51) --
	(132.12, 68.37) --
	(129.27, 68.37) --
	cycle;
\definecolor{fillColor}{RGB}{0,0,0}

\path[fill=fillColor] (130.69, 35.83) --
	(132.61, 32.51) --
	(128.77, 32.51) --
	cycle;
\definecolor{fillColor}{RGB}{0,158,115}

\path[fill=fillColor] (153.20, 77.85) --
	(156.05, 77.85) --
	(156.05, 80.71) --
	(153.20, 80.71) --
	cycle;
\definecolor{fillColor}{RGB}{0,0,0}

\path[fill=fillColor] (154.62, 37.88) --
	(156.55, 34.55) --
	(152.70, 34.55) --
	cycle;
\definecolor{drawColor}{RGB}{0,158,115}

\path[draw=drawColor,line width= 0.6pt,dash pattern=on 1pt off 3pt ,line join=round] ( 34.97, 27.26) --
	( 58.90, 32.61) --
	( 82.83, 42.09) --
	(106.76, 53.42) --
	(130.69, 66.94) --
	(154.62, 79.28);
\definecolor{drawColor}{RGB}{0,0,0}

\path[draw=drawColor,line width= 0.6pt,dash pattern=on 1pt off 3pt ,line join=round] ( 34.97, 26.01) --
	( 58.90, 29.68) --
	( 82.83, 31.01) --
	(106.76, 32.11) --
	(130.69, 33.61) --
	(154.62, 35.66);
\definecolor{drawColor}{gray}{0.20}

\path[draw=drawColor,line width= 0.6pt,line join=round,line cap=round] ( 28.98, 23.18) rectangle (160.61, 85.38);
\end{scope}
\begin{scope}
\path[clip] ( 28.98, 85.38) rectangle (160.61,101.25);
\definecolor{drawColor}{RGB}{255,255,255}
\definecolor{fillColor}{RGB}{255,255,255}

\path[draw=drawColor,line width= 0.6pt,line join=round,line cap=round,fill=fillColor] ( 28.98, 85.38) rectangle (160.61,101.25);
\definecolor{drawColor}{RGB}{0,0,0}

\node[text=drawColor,anchor=base,inner sep=0pt, outer sep=0pt, scale=  0.80] at ( 94.80, 90.56) {GNM $(2^{17}, 2^{21})$};
\end{scope}
\begin{scope}
\path[clip] (  0.00,  0.00) rectangle (162.61,108.41);
\definecolor{drawColor}{gray}{0.20}

\path[draw=drawColor,line width= 0.6pt,line join=round] ( 58.90, 20.43) --
	( 58.90, 23.18);

\path[draw=drawColor,line width= 0.6pt,line join=round] (106.76, 20.43) --
	(106.76, 23.18);

\path[draw=drawColor,line width= 0.6pt,line join=round] (154.62, 20.43) --
	(154.62, 23.18);
\end{scope}
\begin{scope}
\path[clip] (  0.00,  0.00) rectangle (162.61,108.41);
\definecolor{drawColor}{RGB}{0,0,0}

\node[text=drawColor,anchor=base west,inner sep=0pt, outer sep=0pt, scale=  0.80] at ( 54.10, 11.37) {2};

\node[text=drawColor,anchor=base west,inner sep=0pt, outer sep=0pt, scale=  0.56] at ( 58.10, 14.64) {10};

\node[text=drawColor,anchor=base west,inner sep=0pt, outer sep=0pt, scale=  0.80] at (101.96, 11.37) {2};

\node[text=drawColor,anchor=base west,inner sep=0pt, outer sep=0pt, scale=  0.56] at (105.96, 14.64) {12};

\node[text=drawColor,anchor=base west,inner sep=0pt, outer sep=0pt, scale=  0.80] at (149.83, 11.37) {2};

\node[text=drawColor,anchor=base west,inner sep=0pt, outer sep=0pt, scale=  0.56] at (153.82, 14.64) {14};
\end{scope}
\begin{scope}
\path[clip] (  0.00,  0.00) rectangle (162.61,108.41);
\definecolor{drawColor}{RGB}{0,0,0}

\node[text=drawColor,anchor=base east,inner sep=0pt, outer sep=0pt, scale=  0.80] at ( 24.03, 24.01) {0.1};

\node[text=drawColor,anchor=base east,inner sep=0pt, outer sep=0pt, scale=  0.80] at ( 24.03, 51.90) {1.0};

\node[text=drawColor,anchor=base east,inner sep=0pt, outer sep=0pt, scale=  0.80] at ( 24.03, 79.80) {10.0};
\end{scope}
\begin{scope}
\path[clip] (  0.00,  0.00) rectangle (162.61,108.41);
\definecolor{drawColor}{gray}{0.20}

\path[draw=drawColor,line width= 0.6pt,line join=round] ( 26.23, 26.76) --
	( 28.98, 26.76);

\path[draw=drawColor,line width= 0.6pt,line join=round] ( 26.23, 54.66) --
	( 28.98, 54.66);

\path[draw=drawColor,line width= 0.6pt,line join=round] ( 26.23, 82.56) --
	( 28.98, 82.56);
\end{scope}
\begin{scope}
\path[clip] (  0.00,  0.00) rectangle (162.61,108.41);
\definecolor{drawColor}{RGB}{0,0,0}

\node[text=drawColor,anchor=base,inner sep=0pt, outer sep=0pt, scale=  0.80] at ( 94.80,  1.56) {number of cores};
\end{scope}
\begin{scope}
\path[clip] (  0.00,  0.00) rectangle (162.61,108.41);
\definecolor{drawColor}{RGB}{0,0,0}

\node[text=drawColor,rotate= 90.00,anchor=base,inner sep=0pt, outer sep=0pt, scale=  0.80] at (  5.51, 54.28) {running time (sec)};
\end{scope}
\begin{scope}
\path[clip] (  0.00,  0.00) rectangle (162.61,108.41);
\definecolor{drawColor}{RGB}{190,190,190}

\path[draw=drawColor,line width= 0.4pt,line join=round,line cap=round] ( 35.87, 67.70) rectangle ( 82.65, 81.92);
\end{scope}
\begin{scope}
\path[clip] (  0.00,  0.00) rectangle (162.61,108.41);
\definecolor{fillColor}{RGB}{255,255,255}

\path[fill=fillColor] ( 35.87, 67.70) rectangle ( 82.65, 81.92);
\end{scope}
\begin{scope}
\path[clip] (  0.00,  0.00) rectangle (162.61,108.41);
\definecolor{fillColor}{RGB}{255,255,255}

\path[fill=fillColor] ( 35.87, 74.81) rectangle ( 50.32, 81.92);
\end{scope}
\begin{scope}
\path[clip] (  0.00,  0.00) rectangle (162.61,108.41);
\definecolor{fillColor}{RGB}{0,158,115}

\path[fill=fillColor] ( 41.67, 76.94) --
	( 44.52, 76.94) --
	( 44.52, 79.79) --
	( 41.67, 79.79) --
	cycle;
\end{scope}
\begin{scope}
\path[clip] (  0.00,  0.00) rectangle (162.61,108.41);
\definecolor{drawColor}{RGB}{0,158,115}

\path[draw=drawColor,line width= 0.6pt,dash pattern=on 1pt off 3pt ,line join=round] ( 37.31, 78.37) -- ( 48.88, 78.37);
\end{scope}
\begin{scope}
\path[clip] (  0.00,  0.00) rectangle (162.61,108.41);
\definecolor{fillColor}{RGB}{255,255,255}

\path[fill=fillColor] ( 35.87, 67.70) rectangle ( 50.32, 74.81);
\end{scope}
\begin{scope}
\path[clip] (  0.00,  0.00) rectangle (162.61,108.41);
\definecolor{fillColor}{RGB}{0,0,0}

\path[fill=fillColor] ( 43.09, 73.47) --
	( 45.02, 70.14) --
	( 41.17, 70.14) --
	cycle;
\end{scope}
\begin{scope}
\path[clip] (  0.00,  0.00) rectangle (162.61,108.41);
\definecolor{drawColor}{RGB}{0,0,0}

\path[draw=drawColor,line width= 0.6pt,dash pattern=on 1pt off 3pt ,line join=round] ( 37.31, 71.25) -- ( 48.88, 71.25);
\end{scope}
\begin{scope}
\path[clip] (  0.00,  0.00) rectangle (162.61,108.41);
\definecolor{drawColor}{RGB}{0,0,0}

\node[text=drawColor,anchor=base west,inner sep=0pt, outer sep=0pt, scale=  0.70] at ( 55.82, 75.96) {one-level};
\end{scope}
\begin{scope}
\path[clip] (  0.00,  0.00) rectangle (162.61,108.41);
\definecolor{drawColor}{RGB}{0,0,0}

\node[text=drawColor,anchor=base west,inner sep=0pt, outer sep=0pt, scale=  0.70] at ( 55.82, 68.84) {two-level};
\end{scope}
\end{tikzpicture}
	\vspace*{-0.25cm}
	\caption{Effect of two-level \alltoall{}~on the component contraction of \cref{alg:basic_msf}.}
	\label{plot:twolevel_alltoall_effect}
\end{figure}
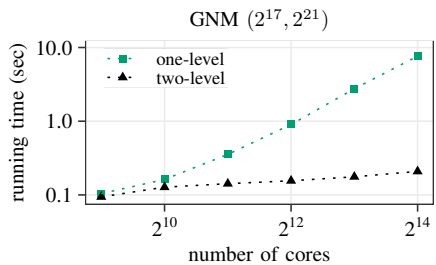

\subsection{Local Preprocessing Optimizations}
We enhance the modified \Boruvka{} variant used for the local preprocessing step with a recursive edge-filtering approach as in Filter-\Boruvka{}. Our multithreaded implementation uses the \texttt{Min-Priority-Write}-approach for minimum edge computation (as well as some other building blocks)
from a fast shared-memory MST algorithm \cite{dhulipala2021theoretically_short}.
We apply the preprocessing only if at least 10\% of the edges are local.

In preliminary experiments, we found that local preprocessing considerably reduces the number of vertices leaving many parallel edges.
Instead of sorting all edges $E$ to remove parallel ones, we determine a pivot weight $w$ such that the set $E'$ of edges lighter than $w$ is small.
The edges from $E'$ are then inserted into a hash table $H$ omitting their weight. In a subsequent scan over $E$ we can filter out all edges that are in $H$. We then only have to sort the remaining edges and remove parallel ones therein in a final scan.
This variant outperforms the pure sorting approach by up to a factor of 2.5 if the hash table remains small enough to fit into the cache.
\subsection{Further Remarks}
Regarding distributed sorting we use distributed hypercube quicksort \cite{axtmann2017robust_short} if the average number of elements to sort per PE is below $512$.
For larger inputs we use our own implementation of distributed two-level sample sort (similar to AMS-sort \cite{axtmann2015practical_short,axtmann2017robust_short}) applying the hypercube algorithm to sort the samples.

In order to be able to output the original source and destination vertices of an MST edge, we add an id to every edge prior to the actual MST computation. The id of an MST edge is then looked up in a copy of the initial edge list.
As main memory on compute cluster nodes is notoriously scarce, this copy is stored with $7$-bit variable length encoding on the differences of consecutive vertices.
Note that the time for encoding is not included in our experiments, however, we account for decoding the compressed edge list twice (before and after the actual MST computation).%
\footnote{Also note that encoding takes at most 30\% of the running time and that our competitors in \cref{s:experiments} do not produce a similarly prepared output.}

For switching to the base case in our distributed \Boruvka{}~algorithm, we use
the maximum of two times the number of MPI processes and $35\,000$. In
distributed Filter-\Boruvka{}, we stop the recursive partitioning and employ
our distributed \Boruvka~algorithm as base case when the average degree
of the graph is four or less. We also refrain from further partitioning if the graph
is too small (less than 1000 edges per MPI processes). Furthermore, if the
number of filtered edges is too small, these are not processed directly but
propagated back to the previous recursion level and merged with the heavy edges
there.

\section{Experiments}\label{s:experiments}
We now discuss the experimental evaluation of our algorithms. An implementation is available at \url{https://github.com/mschimek/kamsta}.
We compare our algorithms (\texttt{boruvka} and \texttt{filterBoruvka}) against two state-of-the-art competitors: \texttt{sparseMatrix} by Baer~\etal~\cite{baer2022parallel_short} and \texttt{MND-MST} by Panja~\etal~\cite{panja2018mnd_short}.
All algorithms are written in \texttt{C++} and compiled with \texttt{g++} version 10.2.0 using optimizations \texttt{-O3} and \texttt{-march=native}.
We use \texttt{OpenMPI} version 4.0.4 for interprocess communication. All algorithms use multithreading with \texttt{OpenMP}.

\begin{table}[t]
	\centering
	\caption{Real-world instances used in our strong scaling experiments.}
	\begin{tabular}{
			l
			S[table-format=3.1e1, round-precision = 1, round-mode=places, scientific-notation=engineering]
			S[table-format=3.1e1, round-precision = 1, round-mode=places, scientific-notation=engineering]
			l
			l
		}
		Graph        & $n$        & $m$          & Source                                     & Type                    \\
		\hline
		\friendsterN & 68349466   & 3623698684   & KONECT~\cite{konect_short}                 & \multirow{2}{*}{social} \\
		\twitterN    & 41652230   & 2405026092   & ~\cite{twitter_short}                      &                         \\
		\hline
		\ukwebN      & 105896436  & 6603753128   & LAW~\cite{BoVWFI_short,BRSLLP_short}       & \multirow{3}{*}{web}    \\
		\itwebN      & 41291594   & 2054949894   & LAW~\cite{BoVWFI_short,BRSLLP_short}       &                         \\
		\wcdN        & 1724573718 & 123870780939 & WDC~\cite{meusel2015graph_short}           &                         \\
		\hline
		\usroadN     & 23947347   & 57708624     & DIMACS~\cite{demetrescu2009shortest_short} & road                    \\
	\end{tabular}
	\vspace*{0.1cm}
	\label{tab:instances}
	\vspace*{-0.45cm}
\end{table}

\begin{figure*}
	\centering
	\vspace*{-.50cm}
	\input{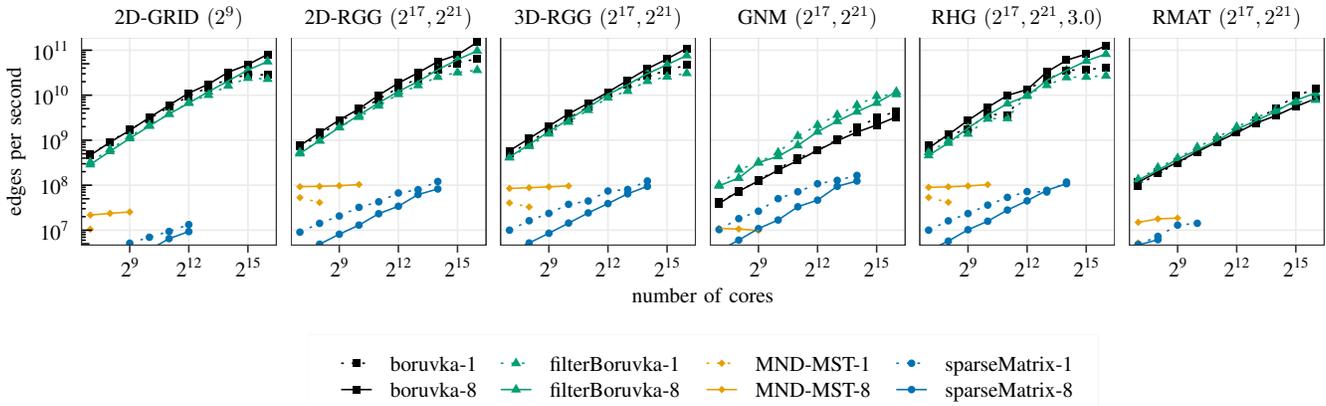}
	\vspace*{-0.45cm}
	\caption{Throughputs in our weak scaling experiments with $2^{17}$ vertices and $2^{21}$ edges per core. Each algorithm was executed with one and eight threads.}
	\label{plot:synthetic_17_20}
\end{figure*}
\texttt{SparseMatrix} adapts the Awerbuch-Shiloach PRAM algorithm\cite{awerbuch1987new_short} to the distributed setting using linear algebra primitives and leveraging Cyclops, a library for generalized sparse tensor algebra.
They use a 2D-partitioning scheme where the graph's adjacency matrix is divided into blocks which are then distributed among the PEs.
The source code is publicly available\footnote{\url{https://github.com/raghavendrak/algebraic MSF}}.
\texttt{MND-MST} is a multinode GPU-CPU algorithm, which also comprises a CPU-only version. It uses \Boruvka{}'s algorithm to compute local MST edges and to contract the incident vertices.
Afterwards, fixed size groups of PEs exchange parts of the previously contracted vertices and iteratively apply \Boruvka{}'s algorithm on their local input.
Once a threshold on the size of the reduced graph is reached, all group members send their contracted graphs to the group \emph{leader}.
Then, the whole process starts again with only the group leaders performing computations.
As in our algorithms, they use 1D-partitioning. However, they do not share vertices beyond process boundaries which can lead to load imbalances for graphs with very skewed degree distributions.
The source code was provided by the authors directly. However, it seems to deviate somewhat from the algorithm described in the paper \cite{panja2018mnd_short}.

For our evaluation, we mainly\footnote{We also conducted experiments on the smaller HoreKa supercomputer. As the
results obtained there are in line with our findings from SuperMUC-NG, we omit
them due to space limitations.} use the SuperMUC-NG supercomputer. The thin node cluster consists of 6\,336 compute nodes with a total of 304\,128 cores.
A compute nodes is equipped with an Intel Xeon Platinum 8174 processor with 48 cores and 96 GByte of main memory and runs SUSE Linux Enterprise Server 15 SP3.
The nodes are connected via an OmniPath network with 100 Gbit/s bandwidth.

\paragraph*{Instances and Methodology}
For our strong scaling experiments, we use the six real-world graphs listed in \cref{tab:instances}.
The number of (symmetric, directed) edges of these graphs ranges from 57 millions to 123 billions.
In our weak scaling experiments, we use instances of six different graph families generated with \texttt{KaGen}\footnote{\url{https://github.com/sebalamm/KaGen}} \cite{funke2019communication_short} and a fast RMAT generator \cite{hubschle2020linear_short}.
We use two-dimensional grids (2D-GRID), two- and three-dimensional random geometric graphs (2D/3D-RGG), random hyperbolic graphs (RHG), Erd\H{o}s-Renyi graphs (GNM) and RMAT graphs.

RGGs are constructed by placing vertices uniformly at random in the unit square (unit cube for 3D) and each MPI process is assigned a part thereof.
Vertices are connected if the Euclidean distance is below a threshold $d$.
RHG construction is conceptually similar, as vertices are placed on a disk with radius $r$ that depends on the average degree and power-law exponent $\gamma$, where the disk is again evenly divided among the MPI processes.
Two vertices are adjacent, if the (hyperbolic) distance is smaller than $r$.
In Erd\H{o}s-Renyi graphs, each edge is inserted with a probability given as an input parameter. RMAT graphs are generated by recursively partition the $n\times n$ adjacency matrix. An edge is inserted as soon as a $1\times1$ partition size is reached. We use the default probabilities from the Graph500 benchmark.

Grids and RGGs
have a high degree of locality.
GNMs and RMAT graphs consist almost exclusively of cut-edges.
RHGs are somewhere in between.
RHGs (for which we use power-law exponent $\gamma=3.0$) and RMAT graphs possess a power-law degree distribution.

All graphs are scaled such that the number of vertices and number of (symmetric, directed) edges are proportional to the number of cores used ($=\mathrm{\#MPIprocesses} \times \mathrm{\#threads}$). For RGG/GNM the threshold distance/edge probability is chosen accordingly.
\texttt{KaGen} ensures that the generated edges are globally lexicographically sorted and thus do not produce shared vertices for the input.
Regarding the RMAT generator \cite{hubschle2020linear_short}, we first globally sort the generated edges and then redistribute them equally over all PEs.
For \texttt{MND-MST}, the edges incident to a shared vertex are moved completely to one MPI process to meet their input format.
Following the experimental setup in \cite{baer2022parallel_short}, we assign a weight drawn uniformly at random from $[1,255)$
to each edge.
On every benchmark set, all algorithms are run at least three times with one \emph{warm-up} round which we discard. As the variances in running time are usually very low, we report only the mean running time.

\subsection{Weak Scaling}
\cref{plot:synthetic_17_20} shows the throughput (measured in edges per second)
achieved in our weak scaling experiments with $2^{17}$ vertices and $2^{21}$
edges per core on up to $2^{16}$ cores. Due to the high running time we ran our
competitors only up to $2^{14}$ cores to save computation time. \texttt{MND-MST} crashed beyond $1024$
cores on all instances. On the grid and RMAT instances,
\texttt{sparseMatrix} also crashed beyond $4096$ and $1024$ respectively.
Our two algorithms (\texttt{boruvka} and \texttt{filterBoruvka}) outperform the competitors clearly on all instances.

Especially on graphs
with high locality, we achieve speedups of two orders of magnitude
over our competitors peaking at a factor $800$ for grid graphs.
For \texttt{sparseMatrix} we attribute this to the fact that the algorithm does not exploit locality. Furthermore, their 2D-partitioning makes exploiting graph locality more challenging from a structural point of view as only the processors on the diagonal of the matrix possess local edges.
\texttt{MND-MST} is faster on these inputs as its design aims at (and relies on) using locality in graphs. We believe that its scalability problems are to some extent due to weaknesses in its implementation which is not designed to scale to several thousands of cores.

For GNM and RMAT, local preprocessing is not effective as there are only very few local edges. Nevertheless, we are up to $36$ times faster than our competitors. Moreover,
we see -- especially for GNM -- the effectiveness of our filter approach being
up to $4$ times faster than our non-filter variant. In additional weak scaling
experiments on denser graphs with $2^{23}$ edges per core, which we omit due to space limitations, this effect is even stronger.
For graphs with high locality, our filter approach performs slightly worse since the locally contracted graphs are significantly smaller than the input and the overhead introduced by the filtering step does not pay off.

Our 8-thread variants are faster and scale better than their 1-thread counterparts for highly local graphs. This is as expected as each MPI process obtains a larger part of the graph which enables a more effective local contraction yielding a smaller remaining graph.
Surprisingly, the 1-thread variant is noticeably faster for GNM (and on some configurations for RMAT). More detailed measurements reveal that one reason for this is the time spent within \texttt{MPI\_Alltoallv} exchanges. Here, we seem to be limited by the single-threaded execution within MPI.

Overall, our algorithms prove to be scalable on up to $2^{16}$ cores even on RMAT graphs that are notorious for their highly skewed degree distribution which often imposes scalability problems.
We attribute our good scalability on RMAT instances to our edge-based 1D-partitioning as it implicitly splits (very) high degree vertices (and their incident edges) in shared vertices over multiple processors preventing load imbalances.

\cref{plot:local_contraction_effect} shows our algorithms with disabled local preprocessing on grids and random geometric/hyperbolic graphs with $2^{17}$ vertices and $2^{23}$ edges per core.
We can see that local contraction makes our algorithms up to $5$ times faster. Moreover, it turns out that the filtering approach is also beneficial for graphs with many local edges if the graph size is not too small.
This indicates that on machines with more memory per compute node, which would enable us to tackle larger graphs, filtering could also be beneficial on these graph families with local preprocessing enabled.

\cref{plot:runtime_ratio} shows the normalized running time distribution of the different phases of our four variants (\textbf{b}oruvka-\{1,8\}, \textbf{f}ilterBoruvka-\{1,8\}). We see that for 3D-RGG (prototypical for the other graphs with high locality) a considerable amount of time is consumed by the local preprocessing.
For GNM and RMAT, the local preprocessing time is negligible as these graphs possess many cut-edges and we skip this step after a quick check if the number of cut-edges exceeds $90\%$.
For these two graph families, most of the running time is spent in label exchange and the redistribution of the edges.
The running time distribution of Filter-\Boruvka~shows that the time spent in these communication intense phases can be significantly reduced with filtering, which in turn becomes dominant for GNM and RMAT.
We further see that by successfully applying our two-level \alltoall, the time spent for pointer doubling during component contraction does only contribute a minor factor to the running time for all graphs. This also justifies our focus on per-edge computation in the analysis (\cref{ss:analysis}).

\begin{figure}
	\centering
	\vspace*{-.45cm}
	\input{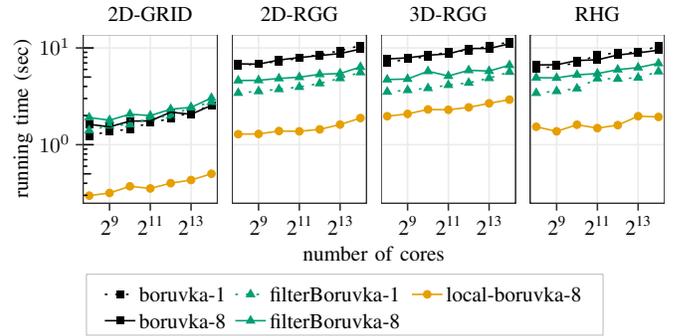}
	\vspace*{-0.45cm}
	\caption{Running time of our algorithms without local preprocessing on highly-local graphs with $2^{17}$ vertices and $2^{23}$ edges per core. Our fastest variant with local preprocessing enabled -- \texttt{local-boruvka-8} -- is given as a baseline.}
	\label{plot:local_contraction_effect}
\end{figure}

\begin{figure*}
	\centering
	\vspace*{-.50cm}
	\input{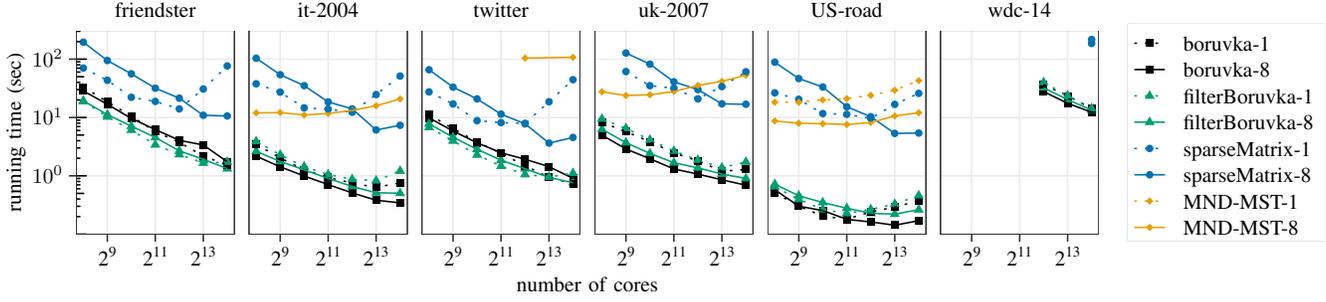}
	\vspace*{-0.45cm}
	\caption{Strong scaling experiments on real-world graphs.}
	\label{plot:real_world}
\end{figure*}

\begin{figure*}
	\centering
	\vspace*{-.45cm}
	\input{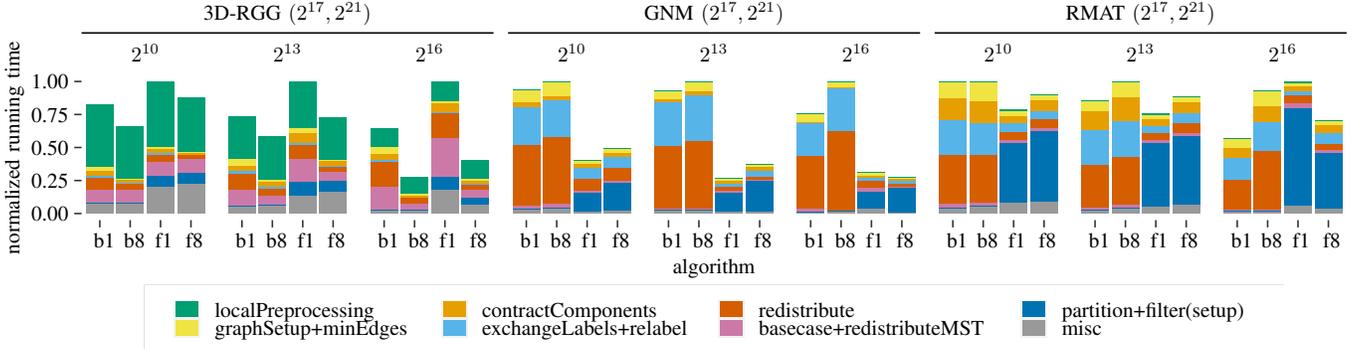}
	\vspace*{-0.40cm}
	\caption{Normalized running times to the range $[0,1]$ of different steps of our algorithms with respect to the slowest variant in each graph$\times$number-of-cores configuration.}
	\label{plot:runtime_ratio}
\end{figure*}

\subsection{Strong Scaling}
We now discuss the strong scaling experiments presented in \cref{plot:real_world}. Due to memory limitations, our competitors could not process all graphs on all configurations.
Our own algorithms are able to process all graphs on $2^8$ to $2^{14}$ processors except for \wcd{} for which we also need at least $4096$ cores.
Our algorithms exhibit good scalability and are $4$ to $40$ times faster than our competitors, which also scale worse for all graphs but \usroad{}.
For \usroad{}, which is relatively small with only 57 million edges, we achieve our best running time for $8192$ cores.

Furthermore, we see that the 8-thread variants tend to outperform their 1-thread counterparts with increasing number of cores used.
This is not surprising as the number of vertices and edges per MPI process decreases and the latency-overhead introduced is less and less compensated by local work and communication volume.

For the social instances, our filtering approach tends to be faster than our non-filter algorithm. For all other graphs, our non-filter approach performs better.

\subsection{Comparisons with Shared-Memory Algorithms}
For their algorithm \texttt{MASTIFF}, Esfahani~\etal~\cite{esfahani2022mastiff_short} provide measurements on a $128$-core shared-memory server with $2$ TB main memory for \twitter, \friendster{}, \usroad{} and \wcd{}.
Comparing their running times for the first three graphs with the fastest of our algorithms for each graph on $256$ cores ($6$ compute nodes with a total of $576$ GB main memory) yields an average speedup of \texttt{MASTIFF} over our algorithms of $2.5$.
From $1024$ cores on, we are faster on \friendster{} and \usroad. For \twitter{}, we need $2048$ cores.
Due to memory limitations, we need $4096$ cores ($86$ compute nodes with $8.3$ TB main memory) to be able to process \wcd{} in $27.8$ seconds while \texttt{MASTIFF} processes this graph in $45.7$ seconds on their machine.
Although the evaluation setting is not identical, this rough comparison indicates that our algorithms are only a modest factor slower than state-of-the-art shared-memory algorithms.

\section{Conclusion and Future Work}\label{s:conclusion}
We have demonstrated that MSTs on huge networks can be computed on
large supercomputers in a scalable way.
Improvements over previous approaches are typically one or two orders of magnitude, approaching three orders of magnitude for some large configurations. We achieve that by maximizing local computations and by using
efficient primitives for global communication like two-phase sparse \alltoall{}~and scalable sorters.
We see this as an important step in a larger
effort to obtain efficient massively parallel graph algorithms on a larger range
of problems.

While this was generally successful, our results also demonstrate several
challenges that may also apply to other graph problems. Perhaps most obviously,
global communication remains the main bottleneck which might be mitigated by
further refining the used primitives and by devising new algorithms that use
less global synchronization. From a more theoretical perspective we also want
algorithms that reduce the gap between theory and practice, \eg{} by giving
practical implementations with asymptotic worst-case performance at least as
good as using PRAM emulation.

More surprisingly, we observe a complicated tradeoff regarding the most
efficient number of threads per process. Arriving at an implementation of the
shared-memory parts that efficiently use all hardware threads of a compute node
would significantly improve also overall scalability since it enables more
local contraction and more coarse-grained communication. Since similar effects
show up in other problems we study, this seems to be an important and
nontrivial area of further investigation.

From a more practical perspective, our approach of a sophisticated, relatively
low-level implementation for one of the most basic graph problems makes sense
for basic research on parallel algorithms but is untenable for more complex
problems. Transferring our techniques in a generic way into more high-level
tools like sparse matrix libraries or graph tools therefore seems an important
direction of further research in order to reduce the observed performance and
scalability gap.

Interestingly, single \Boruvka{} rounds are also an
important part of more sophisticated MST algorithms with better performance
guarantees like the expected linear time algorithm\cite{karger1995randomized_short} and the related PRAM
algorithm\cite{ColeKT96_short}. Therefore, we believe that the algorithmic
building blocks developed in this work can also be of interest for distributed
implementations of such more complex MST algorithms.

\section*{Acknowledgment}
The authors gratefully acknowledge the Gauss Centre for Supercomputing e.V.
(www.gauss-centre.eu) for funding this project by providing computing time on
the GCS Supercomputer SuperMUC-NG at Leibniz Supercomputing Centre
(www.lrz.de).
This work was performed on the HoreKa supercomputer funded by the
Ministry of Science, Research and the Arts Baden-Württemberg and by
the Federal Ministry of Education and Research.
\bibliography{references}
\end{document}